\documentclass[conference]{IEEEtran}
\IEEEoverridecommandlockouts

\usepackage{cite}
\usepackage{threeparttable}
\usepackage{graphicx}
\usepackage{picinpar}
\usepackage[cmex10]{amsmath}
\usepackage{amsmath,amsfonts,amssymb}
\usepackage[ruled,lined,boxed]{algorithm2e}
\usepackage{subfigure}
\usepackage{threeparttable}
\usepackage{caption}
\usepackage{booktabs}
\usepackage{multirow}
\usepackage{etoolbox}
\usepackage{makecell}
\usepackage{amssymb}
\usepackage{mathrsfs}
\usepackage{color}
\definecolor{purple}{RGB}{160,32,240}
\definecolor{darkred}{RGB}{255,0,255}
\usepackage{stfloats}
\usepackage{bm}
\setlength{\parskip}{-0.3mm}

\newtheorem{lemma}{Lemma}

\newtheorem{proof}{Proof}

%%% HYPERLINK
%\makeatletter
%\let\NAT@parse\undefined
%\makeatother
%\usepackage{hyperref}  %hyperref still needs to be put at the end!

\begin{document}

\title{
	Joint Active User Detection and Channel Estimation for Grant-Free NOMA-OTFS in LEO Constellation Internet-of-Things
	
%	\thanks{The work was supported by NSFC under Grants 62071044 and 61827901, the BJNSF under Grant L182024.}
}

\author{
\IEEEauthorblockN{Xingyu Zhou\IEEEauthorrefmark{1}, Zhen Gao\IEEEauthorrefmark{2}}
\IEEEauthorblockA{\IEEEauthorrefmark{1}School of Information and Electronics, Beijing Institute of Technology, Beijing 100081, China}
\IEEEauthorblockA{\IEEEauthorrefmark{2}Advanced Research Institute of Multidisciplinary Science, Beijing Institute of Technology, Beijing 100081, China}
\IEEEauthorblockA{E-mail: gaozhen010375@foxmail.com}
}

% make the title area
\maketitle

\begin{abstract}
	
The flourishing low-Earth orbit (LEO) constellation communication network provides a promising solution for seamless coverage services to Internet-of-Things (IoT) terminals. However, confronted with massive connectivity and rapid variation of terrestrial-satellite link (TSL), the traditional grant-free random-access schemes always fail to match this scenario.  	
In this paper, a new non-orthogonal multiple-access (NOMA) transmission
protocol that incorporates orthogonal time frequency
space (OTFS) modulation is proposed to solve these problems.
Furthermore, we propose a two-stages joint active user detection and channel estimation scheme based on the training sequences aided OTFS data frame structure. 
Specifically, in the first stage, with the aid of training sequences, we perform active user detection and coarse channel estimation by recovering the sparse sampled channel vectors. 
And then, we develop a parametric approach to facilitate more accurate result of channel estimation with the previously recovered sampled channel vectors according to the inherent characteristics of TSL channel.     
Simulation results demonstrate the superiority of the proposed
method in this kind of high-mobility scenario in the end.

\end{abstract}

\begin{IEEEkeywords}
Internet of Things (IoT), low-Earth orbit (LEO) satellite, orthogonal time frequency space (OTFS), non-orthogonal multiple access (NOMA).
\end{IEEEkeywords}

\IEEEpeerreviewmaketitle

\section{Introduction}

With the spring up of the low earth orbit (LEO) satellite constellation communication network, it has been expected to complement and extend terrestrial communication networks for the sake of its wide scope of coverage.  
As a consequence, LEO satellite constellation also provides a promising solution for supporting seamless coverage services in the Internet of Things (IoT) \cite{IoT2,liu.LEO}.

In fact, confronted with the massive machine-type communication, the grant-free random-access (GF-RA) schemes are always favored in IoT for their low-complexity, low-latency, and high-reliability \cite{LEO.IoT}, where active IoT terminals share the non-orthogonal resource allocation and directly transmit their data packets and pilot sequences without applying for the grant.   
Then along comes one crucial task to detect IoT terminals' activity. 
By exploiting the intrinsic sporadic traffic property \cite{LEO.IoT}, several compressive sensing (CS)-based GF-RA schemes have been proposed, where active user detection (AUD) was formulated as a sparse signal recovery problem \cite{AUD}. 
In \cite{wang1,wang2}, two low-complexity multi-user detectors based on structured CS were proposed to jointly detect users' activity and transmit signal in several continuous time slots. 
Moreover, to further improve performance of the system, some joint AUD and channel estimation (CE) schemes on the basis of approximate message passing (AMP) were proposed in massive multiple-input multiple-output (MIMO) systems \cite{AMP1,AMP2}. 
Besides, for narrowband terrestrial–satellite GF-RA system, a Bernoulli–Rician message passing with expectation–maximization
(BR-MP-EM) algorithm was proposed to address the AUD and CE problem \cite{LEO.IoT}.
For the challenging massive MIMO orthogonal frequency division multiplexing (OFDM) systems, the authors in \cite{Ke.Massive access} developed a CS-based adaptive AUD and CE method by leveraging the sporadic traffic and the virtual angular domain sparsity of massive MIMO channels.

Most prior works mentioned above relied on the premise of slow-changing property of channel state, in other words it could be viewed as relatively static over the time intervals of interest. 
However, the high mobility of satellite inevitably leads to rapid variation of terrestrial–satellite link (TSL). 
As a result, the existing GR-RA schemes \cite{LEO.IoT,AUD,wang1,wang2,AMP1,AMP2,Ke.Massive access} fail to match the LEO constellation-enabled IoT. 

A new two-dimensional modulation scheme, orthogonal time frequency space (OTFS), is expected to be a promising alternative to the dominant OFDM to support reliable communication under high-mobility scenarios \cite{OTFS}.  
By multiplexing information symbols on a lattice in the Delay-Doppler (DD) domain and utilizing a compact DD channel model that explicitly represents a stable and deterministic geometry of the channel, OTFS can achieve efficient and accurate channel estimation and signal detection with additional diversity gain.
Therefore, motivated by these compelling characteristics of OTFS, 
we intend to introduce a novel new non-orthogonal multiple-access (NOMA) transmission protocol that incorporates OTFS modulation to LEO constellation-enabled IoT, where the information symbols transmitted by IoT terminals are arranged in DD plane and modulated by OTFS to combat the dynamics of the TSL.

Furthermore, we propose a joint AUD and CE method to detect activities and estimate channel for all potential terminals which share the same access resources based on the designed training sequences aided OTFS (TS-OTFS) data frame structure. 
The proposed method includes two steps.  
First of all, by exploiting sporadic traffic of IoT and the inherent sparsity of TSL channel, the AUD and coarse CE can be formulated as a sparse signal recovery problem, where the activity state of all terminals and the sparse sampled channel vectors can be recovered by solving it. 
Moreover, a parametric approach is developed to facilitate more accurate result of channel estimation with the previously recovered sampled channel vectors in accordance with the inherent characteristics of TSL channel.   

\textit{Notation}: Throughout this paper scalar variables are denoted
by normal-face letters, while boldface lower and upper-case letters
denote column vectors and matrices, respectively. $\left[ \mathbf{X} \right]_{m,n}$ is the $(m,n)$-th element of matrix $\mathbf{X}$; $\left[ \mathbf{X} \right]_{m,:}$ and $\left[ \mathbf{X} \right]_{:,n}$ are the the $m$-th row vector and the $n$-th column vector of matrix $\mathbf{X}$, respectively. The transpose, conjugate transpose and pseudo-inverse operators are denoted by $(\cdot)^{\rm T}$, $(\cdot)^{\rm H}$ and $(\cdot)^{\dagger}$, respectively. $|\mathcal{A}|$ is the number of elements of set $\mathcal{A}$. Finally, the operator $\odot$  denotes the Hadamard product and $\delta[\cdot]$ represents the Dirac function. 

\section{System Model}\label{S2}

In this section, we first introduce the proposed NOMA-OTFS transmission protocol in detail. Then the training sequence aided OTFS (TS-OTFS) modulation/demodulation architecture is presented.

\subsection{General Principle of NOMA-OTFS}\label{S2.1}

As illustrated in Fig.~\ref{fig1:system model}, we consider a prospective communication scenario, where LEO satellite provides seamless coverage services to numbers of single-antenna IoT terminals.
The LEO satellite is equipped with an uniform planar array (UPA) composed of  $P = P_x \times P_y$ antennas, where $P_x$ and $P_y$ are the number of antennas on the x- and y-axes respectively. 
In light of the sporadic traffic property in typical IoT \cite{Ke.Massive access}, within a given time interval, the number of active IoT terminals $K_a$ is small compared to all potential IoT terminals $K$, i.e., $K_a \ll K$. 
The active IoT terminals share the non-orthogonal resource allocation and directly transmit their data packets arranged in DD domain lattice and pilot sequences without applying for the grant and the inactive remain sleep. 
To represent the activity state of these terminals, we define activity indicator $\alpha_k$, which is equal to 1 when the $k$-th IoT terminal is active and 0 otherwise, and the set of active terminals as $\mathcal{A} = \{k|\alpha_k = 1, 1 \le k \le K\}$.

Since there are few propagation scatters in the TSL, and the lines connecting the LEO satellite and different IoT terminals are hardly blocked by obstacles, it's reasonable to assume it's a line-of-sight (LoS) link.  
Additionally, since different IoT terminals are spatially far apart, the link between LEO satellite and different IoT terminals can be viewed as approximately uncorrelated.
Therefore, with reference to \cite{You.LEO}, the discrete form of uplink delay domain channel between the $p$-th receive antenna of LEO satellite and the $k$-th IoT terminals at the instant $\kappa$ can be represented by
\begin{align}\label{eq:discrete TSL}
	\begin{split}
		h_{k,p}[\kappa,\ell] 
		& = g_{k} e^{j2\pi\frac{\upsilon_k(\kappa-\ell_k)}{N(M+M_t)}} \delta[\ell-\ell_k] \cdot [\mathbf{v}_k]_p	
	\end{split},
\end{align}
where $g_k$, $\upsilon_k$, $\ell_k$ and $\mathbf{v}_k  \in \mathbb{C}^{P \times 1}$ denotes the small-scale fading factor, the Doppler shift, the propagation delay, and the array steering vector respectively.

\begin{figure}[tp]	
	\centering
	\includegraphics[width=\columnwidth, keepaspectratio]{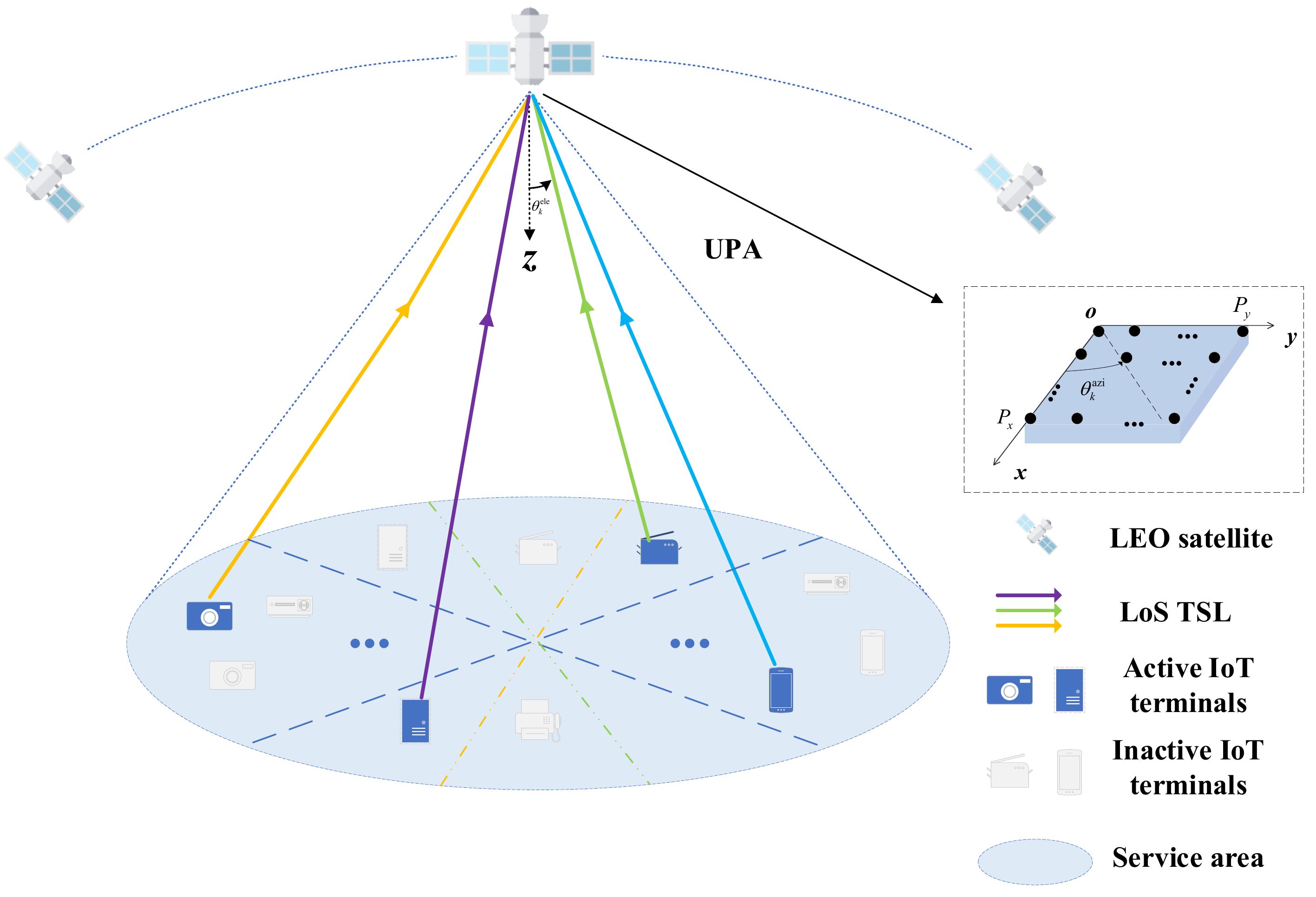}
	\caption{Illustration of the LEO constellation-enabled IoT. }
	\label{fig1:system model}	
\end{figure}

\subsection{TS-OTFS Modulation and Demodulation Architecture}\label{S2.2}
 
Next, we briefly discuss the TS-OTFS modulation and demodulation architecture as follows.

\subsubsection{TS-OTFS Modulation}\label{S2.2.1}

The input information bits are first modulated by quadrature amplitude modulation (QAM) and then rearranged into OTFS data symbols $\mathbf{X}^{\rm DD}_k \in \mathbb{C}^{M \times N}$ in the DD plane, where $N$ and  $M$ are the number of time slots and sub-carriers respectively. 
Then, the OTFS symbols are mapped into time-frequency (TF) domain symbols $\mathbf{X}^{\rm TF}_k \in \mathbb{C}^{M \times N}$ through a pre-processing module, which consists of \textit{inverse symplectic finite Fourier transform} (ISFFT) and a transmit windowing function \cite{OTFS}.  

Moreover, the TF domain symbols $\mathbf{X}^{\rm TF}_k$ are transformed to the time domain transmitted signal $\mathbf{s}_k$ by a TF modulator. Specifically, at first, \textit{Heisenberg transform} \cite{OTFS} is applied on each column of $\mathbf{X}^{\rm TF}_k$ to produce time domain symbols $\tilde{\mathbf{S}}_k \in \mathbb{C}^{M \times N}$
%In fact, the $i$-th column vector $\tilde{\mathbf{s}}_{k,i} \in \mathbb{C}^{M \times 1}$ of $\tilde{\mathbf{S}}_k$ can be regarded as an OFDM symbol \cite{SWQ.OTFS}. 
Next, different from the traditional OTFS system \cite{SWQ.OTFS}, where cyclic prefix is appended for each data symbol $\tilde{\mathbf{s}}_{k,i} \in \mathbb{C}^{M \times 1}$ ($\tilde{\mathbf{s}}_{k,i}$ is the $i$-th column vector of $\tilde{\mathbf{S}}_k$), we add a $M_{t}$-length known training sequence (TS) $\mathbf{c}_k = \left[c_{k,0} ~ c_{k,1} \dots ~ c_{k,M_{t}-1}\right]^{\rm T}$  at the end of data symbol, i.e. $\mathbf{s}_{k,i} = \left[ \tilde{\mathbf{s}}_{k,i}^{\rm T} \; \mathbf{c}^{\rm T}_k \right ]^{\rm T}$.  

Finally, the transmit signal $\mathbf{s}_k \in \mathbb{C}^{(M_t+M)N \times 1}$ is obtained through parallel to serial conversion.

\subsubsection{TS-OTFS Demodulation}\label{S2.1.3}

The $\kappa$-th element of the received signal $\mathbf{r}_p$ from the $p$-th receive antenna, which is the superposition of signals from different active terminals, can be expressed as  
\begin{align}\label{eq:IO} 
	r_p(\kappa) = \sum_{k=1}^K \sum_{l=0}^{L}  \alpha_k h_{k,p}\left[\kappa,\ell\right] s_k\left[\kappa-\ell\right] + w_p(\kappa),	 
\end{align}
where $w_p(\kappa) \sim \mathcal{CN}(0,\sigma^2_w)$ denotes the additive white Gaussian noise (AWGN) at the receiver of LEO satellite.

In the LEO satellite end, the demodulation process is equivalent to the inverse operation of the modulation part. Due to the limited space, its detailed explanation can be referred to \cite{SWQ.OTFS}.

\section{Proposed Joint AUD and CE Scheme}

\begin{figure*}[t]	
	\centering
	\includegraphics[width=2\columnwidth, keepaspectratio]{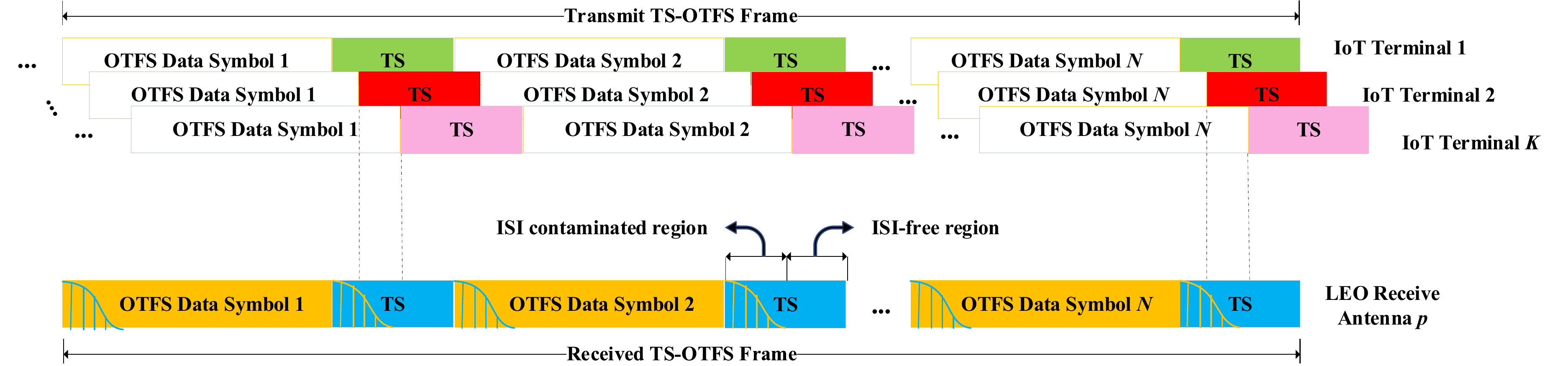}
	\caption{Structure of the transmit and the received TS-OTFS data frame.}
	\label{fig2:data frame}	
\end{figure*}

In fact, the embeded time domain TS's are alternatives of pilot and guard symbols used to perform AUD and CE. 
With the aid of TS's, the AUD and coarse CE can be formulated as a structured sparse signal recovery problem by exploiting the sporadic traffic and inherent sparsity of TSL channel described above.
In this section, we first formulate this problem and solve it under the framework of CS. 
Besides, in order to facilitate more accurate CE, we develop a parametric approach and seek to estimate the parameters fully characterizing TSL channel based on the recovered sparse sampled channel vectors.
With the estimate of these parameters, the channel impulse response (CIR) modeled in (\ref{eq:discrete TSL}) can be reconstructed and more accurate results of CE can be given.  

\subsection{Formulation of AUD and coarse CE}\label{S3.1}

The structure of $\mathbf{s}_k$ and $\mathbf{r}_p$ are illustrated in Fig. \ref{fig2:data frame}, both of which consist of OTFS data symbols and inserted time domain TS's.
When TS's are used to perform AUD and coarse CE, there is a significant challenge that the received TS's are contaminated by the previous OTFS data symbol, whose influence is significant due to long relative propagation delay especially in LEO communication system. 
An effective approach is to utilize the inter-symbol-interference free (ISI-free) region, which is the rear part of the TS's and immune from the influence of the previous data symbol, to perform AUD and coarse CE \cite{IBI-free}.
Therefore, the $M_t$-length TS is designed to be longer than the maximum  relative propagation delay $L$ in order to ensure the reliable system performance and the length of ISI-free region can be denoted as $G \triangleq M - L + 1$.  

In this way, the ISI-free region within the $i$-th received TS $\mathbf{r}_{{\rm TS},p}^{i} \in \mathbb{C}^{G \times 1}$ can be represented based on (\ref{eq:IO}) by 
\begin{align}\label{eq:vec IO}    
	\mathbf{r}_{{\rm TS},p}^{i} = \sum_{k=1}^K \alpha_k \mathbf{\Delta}_k \mathbf{\Psi}_{k} \mathbf{h}_{{\rm TS},k,p}^{i} + \mathbf{w}_{{\rm TS},p}^{i}, 
\end{align}
where $\mathbf{h}_{{\rm TS},k,p}^{i} \in \mathbb{C}^{L \times 1}$ is the vector form of CIR with a unique non-zero element $h_{k,p}[L-M_t+i(M+M_t),\ell_k]$ at the position of $\ell_k+1$, $\mathbf{w}_{{\rm TS},p}^{i} \in \mathbb{C}^{G \times 1}$ is the corresponding noise vector, 
$\mathbf{\Delta}_k = \mathrm{diag}\{ e^{j2\pi\frac{\upsilon_k \cdot 0}{N(M+M_t)}}, e^{j2\pi\frac{\upsilon_k \cdot 1}{N(M+M_t)}}, \dots, e^{j2\pi\frac{\upsilon_k \cdot (G-1)}{N(M+M_t)}} \}$ is the diagonal matrix of Doppler shift,
and $\mathbf{\Psi}_k \in \mathbb{C}^{G \times L}$ is a Toeplitz matrix given by \cite{IBI-free} 
\begin{align}\label{eq:Toeplitz} 
	\mathbf{\Psi_k} =& \left[ \begin{array}{cccc}
		c_{k,L-1} & c_{k,L-2} & \cdots & c_{k,0} \\
		c_{k,L} & c_{k,L-1} & \cdots & c_{k,1} \\
		\vdots & \vdots & \vdots & \vdots \\
		c_{k,M_t-1} & c_{k,M_t-2} & \cdots & c_{k,M_t-L} \\
	\end{array} \right].
\end{align}	

Since $\mathbf{\Delta}_k$ is an unknown matirx for the receiver of LEO
satellite, sparse CIR vectors in (\ref{eq:vec IO}) cannot be recovered with an unknown sensing matrix.
Fortunately, the duration of each ISI-free region is always over a relatively short span of time compared to the whole data frame and thus the TSL can be assumed as invariant without introducing large errors.
In this case, $\mathbf{\Delta}_k$ is approximately an identity matrix and (\ref{eq:vec IO}) can be rewritten as 
\begin{align}\label{eq:approx}    
	\mathbf{r}_{{\rm TS},p}^{i} \approx \sum_{k=1}^K \alpha_k \mathbf{\Psi}_{k} \mathbf{h}_{{\rm TS},k,p}^{i} + \mathbf{w}_{{\rm TS},p}^{i} = \mathbf{\Psi} \tilde{\mathbf{h}}_{{\rm TS},p}^i + \mathbf{w}_{{\rm TS},p}^{i},
\end{align}
where 
$\tilde{\mathbf{h}}^i_{\rm {TS},p} = \left[ \tilde{\mathbf{h}}_{{\rm TS},1,p}^{i,\rm T}, \tilde{\mathbf{h}}_{{\rm TS},2,p}^{i,\rm T}, \dots, \tilde{\mathbf{h}}_{{\rm TS},K,p}^{i,\rm T} \right]^{\rm T} \in \mathbb{C}^{KL \times 1}$ $(\tilde{\mathbf{h}}_{{\rm TS},k,p}^{i} = \alpha_k \mathbf{h}_{{\rm TS},k,p}^{i})$ and 
$\mathbf{\Psi} = \left[\mathbf{\Psi}_1,\mathbf{\Psi}_2,\dots,\mathbf{\Psi}_K\right]  \in \mathbb{C}^{G \times KL}$ 
are composed of sensing matrices and CIR vectors of different terminals respectively. 

Although both the amplitude and the phase of channel change over time and TSL channel related to different terminals vary significantly, there exists spatial correlation at the LEO satellite receiver, which means for different receive antennas, the propagation delay and the Doppler shift of the transmitted signal from the same terminal are roughly the same. 
As a result, the support sets are identical for CIR vectors $\mathbf{h}_{{\rm TS},k,p}^{i}$ with different subscripts $p$, whereas the non-zero coefficients can be different. 
Additionally, since in the duration of one frame, the relative positions of IoT terminals and LEO satellite don't change significantly, it's reasonable to assume the propagation delay and the Doppler shift of the transmit signal from the same terminal are also roughly the same  for different time slots in the one frame.   
Hence, the support sets are identical for CIR vectors $\mathbf{h}_{{\rm TS},k,p}^{i}$ with different superscripts $i$.

With above discussion in mind, we intend to extend (\ref{eq:approx}) to the form of multi-time slots and multi-antennas. 
The multi-antennas form of (\ref{eq:approx}) can be straightly rewritten as 
\begin{align}\label{eq:approx2}  
	\mathbf{R}_{\rm TS}^i \approx \mathbf{\Psi}\tilde{\mathbf{H}}^i_{\rm TS} + \mathbf{W}_{\rm TS}^i,	
\end{align}
where $\mathbf{R}_{\rm TS}^i$, $\tilde{\mathbf{H}}^i_{\rm TS}$ and $\mathbf{W}_{\rm TS}^i$ are made up of corresponding column vectors related to different antennas, i.e., $\mathbf{R}_{\rm TS}^i = \left[ \mathbf{r}_{{\rm TS},1}^{i}, \mathbf{r}_{{\rm TS},2}^{i} ,\dots, \mathbf{r}_{{\rm TS},P}^{i}  \right]  \in \mathbb{C}^{G \times P}$, $\tilde{\mathbf{H}}^i_{\rm TS} = \left[ \tilde{\mathbf{h}}_{{\rm TS},1}^i, \tilde{\mathbf{h}}_{{\rm TS},2}^i, \dots, \tilde{\mathbf{h}}_{{\rm TS},P}^i \right]  \in \mathbb{C}^{KL \times P}$, and $\mathbf{W}_{\rm TS}^i = \left[   \mathbf{w}_{{\rm TS},1}^{i}, \mathbf{w}_{{\rm TS},2}^{i}, \dots, \mathbf{w}_{{\rm TS},P}^{i} \right]  \in \mathbb{C}^{G \times P}$. 
Furthermore, by stacking received signals with different superscript $i$, the multi-time slots form of (\ref{eq:approx2}) can be written as 
\begin{align}\label{eq:approx3}  
	\mathbf{R}_{\rm TS} \approx \mathbf{\Psi}\tilde{\mathbf{H}}_{\rm TS} + \mathbf{W}_{\rm TS},	
\end{align}
where  $\mathbf{R}_{\rm TS}$, $\tilde{\mathbf{H}}_{\rm TS}$ and $\mathbf{W}_{\rm TS}$ are made up of corresponding matrices associated with different time slots, i.e. $\mathbf{R}_{\rm TS} = \left[ \mathbf{R}_{\rm TS}^{(1)},\mathbf{R}_{\rm TS}^{(2)},\dots,\mathbf{R}_{\rm TS}^{(N)} \right]  \in \mathbb{C}^{G \times NP}$, $\tilde{\mathbf{H}}_{\rm TS} = \left[ \tilde{\mathbf{H}}^{(1)}_{\rm TS},\tilde{\mathbf{H}}^{(2)}_{\rm TS},\dots,\tilde{\mathbf{H}}^{(N)}_{\rm TS} \right]  \in \mathbb{C}^{KL \times NP}$, and $\mathbf{W}_{\rm TS} = \left[ \mathbf{W}_{\rm TS}^{(1)},\mathbf{W}_{\rm TS}^{(2)}, \dots, \mathbf{W}_{\rm TS}^{(N)}\right]  \in \mathbb{C}^{G \times NP}$.

\subsection{AUD and coarse CE}\label{S3.2} 

For the joint sparse signal recovery of (\ref{eq:approx3}), many signal recovery algorithms have been proposed under the distributed compressive sensing (DCS) framework, whose goal is to jointly recover a set of vectors that share a common support rather than trying to recover a single sparse vector, and thus the accuracy of signal recovery is highly improved. 
Here, we utilize the well-known greedy algorithm simultaneous orthogonal matching pursuit (SOMP) \cite{SOMP} to recover the sparse CIR vectors of  $\tilde{\mathbf{H}}_{\rm TS}$ and its detailed instruction can be found in \cite{SOMP}. We use $\mathcal{I}$ and $\tilde{\mathbf{H}}_{\rm TS}^{e}$ to denote the recovered index of support set and the estimate of $\tilde{\mathbf{H}}_{\rm TS}$ respectively.

In accordance with the index of support set $\mathcal{I}$, the the index of support set associated with the $k$-th IoT terminal can be denoted as
${\Omega}_{k} = \{\omega_{k} |\omega_{k}\in \mathcal{I}, (k-1)L \le \omega_{k} <  kL \} $ and thus the estimate of terminals activity indicator $\hat{\alpha}_k$ can be acquired from
\begin{align}\label{eq:AUD}
	\hat{\alpha}_k = \left\{ \begin{array}{ll}
		1 & \textrm{if ${\Omega}_{k} \neq \emptyset$}\\
		0 & \textrm{otherwise}\\
	\end{array} \right..
\end{align}
As a result, the set of terminals that are identified as active can be represented by $\hat{\mathcal{A}} = \{k|\hat\alpha_k = 1, 1 \le k \le K\}$ and the number of elements in te set is denoted by $\hat{K}_a = |\hat{\mathcal{A}}|$. 
Besides, the information of their corresponding propagation delay is contained in the index of support set $\mathcal{I}$ as well.  
It's noteworthy that although the real propagation delay is single-valued for  each terminal-receive antenna pair under the LoS satellite communication environment, the estimated propagation delay can be muti-valued due to the wrong selection of support set with SOMP. As a result, the propagation delay related to terminal $k$ ($k \in \hat{\mathcal{A}}$) can be denoted as
\begin{align}\label{eq:delay}
	\hat{\ell}_{k}^q = \omega_{k}^q-(k-1)L,	
\end{align}
where $\omega_{k}^q \; (1 \le q \le |{\Omega}_{k}|)$ is the $q$-th element of ${\Omega}_{k}$.

So far, we have recovered the sparse sampled CIR vectors in $\tilde{\mathbf{H}}_{\rm TS}$ in preparation for the subsequent Doppler shift estimation.
At the same time, propagation delay $\ell_k$ and activity indicator $\alpha_k$ are obtained as  $\hat{\ell}_{k}^q$ and $\hat{\alpha}_k$ respectively for the following more accurate channel estimation.

\subsection{More accurate CE with parametric approach}\label{S3.3}

It can be observed that the CIR modeled in (\ref{eq:discrete TSL}) is complex exponential signal varying with discrete index $\kappa$. 
In fact, once their sampled values at different instants are obtained, the Doppler shift can be estimated according to Nyquist criterion. 
Fortunately, these sampled values are involved in the non-zero elements of recovered CIR vectors in $\tilde{\mathbf{H}}_{\rm TS}^{e}$. 
For the convenience of following Doppler shift estimation, we define an effective CIR matrix $\hat\Upsilon_{k}^q \in \mathbb{C}^{N \times P}$, which is reshaped from the corresponding support set of $\tilde{\mathbf{H}}_{\rm TS}^{e}$, i.e.,
\begin{align}\label{eq:eff CIR}
	\hat\Upsilon_{k}^q = \mathrm{unvec} \left \{ \left[ \tilde{\mathbf{H}}_{\rm TS}^{e} \right]_{\omega_{k}^q,:}^{\rm T} \right \}.
\end{align}

\SetAlgoNoLine
\SetAlFnt{\small}
\SetAlCapFnt{\normalsize}
\SetAlCapNameFnt{\normalsize}
\begin{algorithm}[tp!]
	\caption{TLS-ESPRIT Based Doppler Shift Estimation.} % joint structured AMP
	\label{Alg:1}
	\KwIn{
	1) Recovered effective CIR matrix $\hat\Upsilon_{k}^q$;\\
	~~~~~~~~~~2) Number of time slots $N$;\\
	~~~~~~~~~~3) Number of measurements $P$.
}
	\KwOut{Estimate of Doppler shift $\hat\upsilon_{k}^q(\forall k \in \hat{\mathcal{A}},q \in |{\Omega}_{k}| )$.}
	
	{\bf 1.~Initialization}:\\
	Divide two subarrays for each measurement as $\mathbf{x}_{1,p} = \left[ \hat\Upsilon_{k}^q \right]_{1:N-1,p}$, $\mathbf{x}_{2,p} = \left[ \hat\Upsilon_{k}^q \right]_{2:N,p}$ 
	\\ and their combinations $\mathbf{x}_p = \left[ \mathbf{x}_{1,p}^{\rm T} \; \mathbf{x}_{2,p}^{\rm T} \right]^{\rm T} \in  \mathbb{C}^{2(N-1) \times 1}$. \\  

	{\bf 2. Denoising operation}:\\
	Take the minimum eigenvalue $\sigma^2$ of the noisy covariance matrix ${\mathbf{R}}_{xx} = E\left[\mathbf{x}\mathbf{x}^{\rm H} \right] \approx \frac{1}{P}\sum_{p=1}^{P}  \mathbf{x}_{1,p} \mathbf{x}_{2,p}$ as the estimate of noise variance, and the noise cancelled covariance matrix is given by $\hat{\mathbf{R}}_{xx} = \mathbf{R}_{xx}-\sigma^2 \mathbf{I} $. \\
	
	{\bf 3. Subspace solution}:\\
	The subspace of subarray $\mathbf{x}_{1,p}$ and $\mathbf{x}_{2,p}$ are obatained as $\mathbf{e}_{x_1} = [\hat{\mathbf{U}}_s]_{1:(N-1);1}$ and $\mathbf{e}_{x_2} = [\hat{\mathbf{U}}_s]_{N:2(N-1);1}$
	by solving the eigenvector of $\hat{\mathbf{R}}_{xx} = \hat{\mathbf{U}}_s \hat{\mathbf{\Sigma}}_s \hat{\mathbf{U}}^{\rm H}_s$. \\
	
	{\bf Result}:\\
	Solve the eigenvector of $[\mathbf{e}_{x_1}\mathbf{e}_{x_2}]^{\rm H} [\mathbf{e}_{x_1}\mathbf{e}_{x_2}] = \mathbf{E} \mathbf{\Sigma} \mathbf{E}^{\rm H}$ and partition eigenvalue matrix $\mathbf{E} \in \mathbb{C}^{2 \times 2}$ into four submatrices as $\mathbf{E} \triangleq \begin{bmatrix}e_{11} & e_{12}\\e_{21} &e_{22}\end{bmatrix}$. The estimated Doppler shift $\hat\upsilon_{k}^q$ can be calculated by $\hat\upsilon_{k}^q = \frac{N}{2\pi}\arg(-e_{12}e_{22}^{-1})$.
	
\end{algorithm}

The Doppler shift estimation algorithm based on total least squares criterion of estimating signal parameters via rotational invariance techniques (TLS-ESPRIT) \cite{ESPRIT} is presented in \textbf{Algorithm \ref{Alg:1}}. 
In fact, each column of effective CIR matrix $\hat\Upsilon_{k}^q$ is composed of CIR of $N$ different time slots, e.g., $\left[ \hat\Upsilon_{k}^q \right]_{:,p} = \big[ \hat{h}_{k,p}[L+M,\ell_k],\hat{h}_{k,p}[L+M_t+2M,\ell_k],\dots,\hat{h}_{k,p}[L-M_t+N(M+M_t),\ell_k] \big]^{\rm T}$. 
Moreover, different column vectors of effective CIR matrix $\hat\Upsilon_{k}^q$ come from different receive antennas, where the Doppler shift is approximately the same and thus they can be regarded as multiple measurements to mitigate the effects of noise.
Before proceeding with the estimation of the fading factor, we introduce a lemma as follows.

\begin{lemma} \label{lemma1}	
	Here, it's assumed that the support set recovers perfectly and thus the subscript $q$ can be omitted. We define the non-zero elements of channel vector as 
	$\hat{\mathbf{h}}_{p}^{{\rm eff},i} = \left[ \tilde{\mathbf{H}}_{\rm TS}^e \right]_{\mathcal{I},(i-1)P+p} \in \mathbb{C}^{K_a \times 1}$ and effective fading factor $g_{k,p}^{\rm eff} = g_{k} \cdot [\mathbf{v}_k]_p$. Their relationship can be written as      
	\begin{align}\label{eq:lemma}
		\hat{\mathbf{h}}_{k}^{{\rm eff},i} =  \boldsymbol{\Gamma}\boldsymbol{\eta^{i}} \odot \mathbf{g}^{\rm eff}_p+(\boldsymbol{\Psi}^{\rm H}_{\mathcal{I}} \boldsymbol{\Psi}_{\mathcal{I}})^{-1} \boldsymbol{\Psi}_{\mathcal{I}}^{\rm H} \mathbf{w}_{{\rm TS},p}^{i},	
	\end{align}
	where  $\boldsymbol{\Gamma} = \boldsymbol{\Psi}^{\dagger}_{\mathcal{I}} 
	\left[ \mathbf{\Delta}_{k_1} \boldsymbol{\psi}_{k_1},\mathbf{\Delta}_{k_2} \boldsymbol{\psi}_{k_2},\dots,\mathbf{\Delta}_{k_{K_a}} \boldsymbol{\psi}_{k_{K_a}} \right] $, $\mathbf{g}^{\rm eff}_p = \left[ g_{k_1,p}^{\rm eff}, g_{k_2,p}^{\rm eff},\dots,g_{k_{K_a},p}^{\rm eff} \right]^{\rm T}$ with  $k_1,k_2,\dots,k_{K_a} \in \mathcal{A}$,
    $\boldsymbol{\eta^{i}}$ is a Doppler shift vector and its specific expression can be found (\ref{eq:proof}), and $\boldsymbol{\psi}_{k_1},\boldsymbol{\psi}_{k_2},\dots,\boldsymbol{\psi}_{k_{K_a}}  \in \mathbb{C}^{G \times 1}$ are the support vectors associated with the active terminals in $\mathcal{A}$.
\end{lemma}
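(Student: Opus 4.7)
My plan is to start from the exact received-signal expression (\ref{eq:vec IO}) on the ISI-free region---keeping the Doppler rotation $\mathbf{\Delta}_k$ intact rather than replacing it by the identity as in the approximation (\ref{eq:approx})---and then track exactly what the support-restricted least-squares refinement produces. The key structural observation is that, by (\ref{eq:discrete TSL}), each $\mathbf{h}_{{\rm TS},k,p}^{i}$ has only a single non-zero entry, at position $\ell_k+1$. Consequently $\mathbf{\Psi}_k\mathbf{h}_{{\rm TS},k,p}^{i}$ reduces to a scaled copy of the single column $\boldsymbol{\psi}_k$ of $\mathbf{\Psi}_k$, the scalar being $g_{k,p}^{\rm eff}=g_k[\mathbf{v}_k]_p$ times the Doppler phase $e^{j2\pi\upsilon_k(\kappa_i-\ell_k)/[N(M+M_t)]}$ evaluated at the instant $\kappa_i=L-M_t+i(M+M_t)$ of the $i$-th TS. This phase factor is exactly what I would collect as the $q$-th entry of $\boldsymbol{\eta}^i$, giving the explicit expression that populates (\ref{eq:proof}).

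Next, restricting the summation in (\ref{eq:vec IO}) to $k\in\mathcal{A}$ (since $\alpha_k=0$ otherwise), I would stack these rank-one contributions column-wise and rewrite $\mathbf{r}_{{\rm TS},p}^{i}=\bigl[\mathbf{\Delta}_{k_1}\boldsymbol{\psi}_{k_1},\ldots,\mathbf{\Delta}_{k_{K_a}}\boldsymbol{\psi}_{k_{K_a}}\bigr](\boldsymbol{\eta}^{i}\odot\mathbf{g}_p^{\rm eff})+\mathbf{w}_{{\rm TS},p}^{i}$, using the elementary identity $\mathrm{diag}(\mathbf{g})\boldsymbol{\eta}=\boldsymbol{\eta}\odot\mathbf{g}$ to factor the scalar weights cleanly. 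Because the support $\mathcal{I}$ is assumed recovered perfectly, the non-zero coefficients SOMP returns on $\mathcal{I}$ coincide with the least-squares projection $\hat{\mathbf{h}}_p^{{\rm eff},i}=(\boldsymbol{\Psi}_{\mathcal{I}}^{\rm H}\boldsymbol{\Psi}_{\mathcal{I}})^{-1}\boldsymbol{\Psi}_{\mathcal{I}}^{\rm H}\mathbf{r}_{{\rm TS},p}^{i}$. Substituting the previous display and absorbing $\boldsymbol{\Psi}_{\mathcal{I}}^{\dagger}$ into the spanning vectors then yields $\boldsymbol{\Gamma}(\boldsymbol{\eta}^{i}\odot\mathbf{g}_p^{\rm eff})+(\boldsymbol{\Psi}_{\mathcal{I}}^{\rm H}\boldsymbol{\Psi}_{\mathcal{I}})^{-1}\boldsymbol{\Psi}_{\mathcal{I}}^{\rm H}\mathbf{w}_{{\rm TS},p}^{i}$, exactly matching (\ref{eq:lemma}).

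The main subtlety---and what I expect to be the chief source of confusion rather than a genuine analytic difficulty---is that two distinct channel models coexist throughout the derivation. The sensing matrix $\boldsymbol{\Psi}_{\mathcal{I}}$ used in the LS/SOMP step descends from the Doppler-free approximation (\ref{eq:approx3}), whereas the actual received samples $\mathbf{r}_{{\rm TS},p}^{i}$ obey the exact model (\ref{eq:vec IO}) in which $\mathbf{\Delta}_k\neq\mathbf{I}$. Each $\mathbf{\Delta}_{k_q}$ must therefore be kept attached to the corresponding $\boldsymbol{\psi}_{k_q}$ on the signal side while being deliberately dropped from the inverted sensing matrix on the estimator side. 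That mismatch is exactly what prevents $\boldsymbol{\Gamma}$ from collapsing to the identity and what forces a Doppler-induced leakage term to appear in the coupling matrix as well as inside $\boldsymbol{\eta}^{i}$---precisely the residual structure that \textbf{Algorithm \ref{Alg:1}} later exploits through the column-wise phase ramp of $\hat{\boldsymbol{\Upsilon}}_k^q$.
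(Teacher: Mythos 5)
Your proposal is correct and follows essentially the same route as the paper's own proof: express the support-restricted SOMP/LS refinement as $\boldsymbol{\Psi}_{\mathcal{I}}^{\dagger}\mathbf{r}_{{\rm TS},p}^{i}$, substitute the exact model (\ref{eq:vec IO}) with its 1-sparse per-user CIR vectors so each term collapses to $\mathbf{\Delta}_k\boldsymbol{\psi}_k$ scaled by the Doppler phase times $g_{k,p}^{\rm eff}$, and factor these scalars into $\boldsymbol{\eta}^{i}\odot\mathbf{g}_p^{\rm eff}$ to obtain (\ref{eq:lemma}). Your explicit remark on the mismatch between the Doppler-free sensing matrix and the Doppler-bearing received signal is consistent with, and somewhat more carefully articulated than, the paper's terse derivation in (\ref{eq:proof}).
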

\begin{proof}
	Under the assumption of perfect selection of support set in sparse vector recovery,  $\hat{\mathbf{h}}^{\rm eff}_{k,p}$ can be derived from SOMP as 
	\begin{align}
		\hat{\mathbf{h}}^{{\rm eff},i}_{p} 
		& = \boldsymbol{\Psi}^{\dagger}_{\mathcal{I}}  \mathbf{r}_{{\rm TS},p}^{i}, 
	\end{align}
	Moreover, combined with the (\ref{eq:vec IO}), the relationship of $\hat{\mathbf{h}}_{p}^{{\rm eff},i}$ and $g_{k,p}^{\rm eff}$ can be rewritten as (\ref{eq:proof}).
	This completes the proof.

\end{proof}

\newcounter{mytempeqncnt}
\begin{figure*}[!t]
	% ensure that we have normalsize text
	\normalsize
	% Store the current equation number.
	\setcounter{mytempeqncnt}{\value{equation}}
	% Set the equation number to one less than the one
	% desired for the first equation here.
	% The value here will have to changed if equations
	% are added or removed prior to the place these
	% equations are referenced in the main text.
	%	\setcounter{equation}{7}
	\begin{align}\label{eq:proof}
	\begin{split}	
		\hat{\mathbf{h}}^{{\rm eff},i}_{p}
		& = \boldsymbol{\Psi}^{\dagger}_{\mathcal{I}}   \sum_{k \in \mathcal{A}} \mathbf{\Delta}_k \boldsymbol{\psi}_k \mathbf{h}_{{\rm T}S,k,p}^{i}(\ell_k+1) +  
		\underbrace { (\boldsymbol{\Psi}^{\rm H}_{\mathcal{I}} \boldsymbol{\Psi}_{\mathcal{I}})^{-1} \boldsymbol{\Psi}_{\mathcal{I}}^{\rm H} \mathbf{w}_{{\rm TS},p}^{i}}_{\mathrm{noise}}  \\
		& \approx  \overbrace{\boldsymbol{\Psi}^{\dagger}_{\mathcal{I}} 
			\left[ \mathbf{\Delta}_{k_1} \boldsymbol{\psi}_{k_1},\mathbf{\Delta}_{k_2} \boldsymbol{\psi}_{k_2},\dots,\mathbf{\Delta}_{k_{K_a}} \boldsymbol{\psi}_{k_{K_a}} \right] }^{\boldsymbol{\Gamma}} 
		 \left[ \mathbf{h}_{{\rm TS},k_1,p}^{i}(\ell_{k_1}+1),\mathbf{h}_{{\rm TS},{k_2},p}^{i}(\ell_{k_2}+1),\dots,\mathbf{h}_{{\rm TS},k_{K_a},p}^{i}(\ell_{k_{K_a}}+1) \right]^{\rm T} \\
		& = \boldsymbol{\Gamma}  \underbrace{ \left[ e^{j2\pi\upsilon_{k_1}\frac{L-M_t+i(M+M_t)-\ell_{k_1}}{N(M+M_t)}},e^{j2\pi\upsilon_{k_2}\frac{L-M_t+i(M+M_t)-\ell_{k_2}}{N(M+M_t)}},\dots,e^{j2\pi\upsilon_{k_{K_a}}\frac{L-M_t+i(M+M_t)-\ell_{k_{K_a}}}{N(M+M_t)}} \right]^{\rm T} }_{\boldsymbol{\eta^{i}}} \\
		& \quad
		 \odot \left[ g_{k_1,p}^{\rm eff}, g_{k_2,p}^{\rm eff},\dots,g_{k_{K_a},p}^{\rm eff} \right]^{\rm T} \\
	\end{split}
 \end{align}
	% Restore the current equation number.
	\setcounter{equation}{\value{mytempeqncnt}}
	\stepcounter{equation}
	% IEEE uses as a separator
	\hrulefill
	% The spacer can be tweaked to stop underfull vboxes.
	\vspace*{4pt}
\end{figure*}

In line with the \textbf{Lemma \ref{lemma1}}, with the recovered Doppler shift $\hat\upsilon_{k}^q$ and propogation delay $\hat\ell_{k}^q$, the effective channel fading factor can be calculated.
So far, necessary parameters to reconstruct the CIR have been acquired and the more accurate result of CE can be given by
\begin{align}\label{eq:reconstruct}   
	\hat{h}_{k,p}[\kappa,\ell] = \sum_{q=1}^{|\boldsymbol{\Omega}_{k}|} \hat{g}_{k,p}^{{\rm eff},q} e^{j2\pi\frac{\hat\upsilon_{k}^q(\kappa-\hat{\ell}_{k}^q)}{N(M+M_t)}} \cdot \delta[\ell-\hat{\ell}_{k}^q].
\end{align}

\section{Simulation Results}

In this section, we provide simulation results to prove the effectiveness of our proposed scheme. First of all, to evaluate the performance fairly, we define the activity error rate (AER) of AUD 
\begin{align}\label{eq:AER}
	\mathrm{AER} = \frac{1}{K} \sum_{k=1}^{K} |\hat{\alpha}_k-\alpha_k|, 	
\end{align}
and the normalized mean square error (NMSE) of CE 
\begin{align}\label{eq:NMSE} 
	\mathrm{NMSE} = \frac{\sum\limits_{k,p,\kappa,\ell}   ||
		\hat{\alpha}_k \hat{h}_{k,p}[\kappa,\ell] - \alpha_k h_{k,p}[\kappa,\ell] ||^2_2}
	{\sum\limits_{k,p,\kappa,\ell} ||\alpha_k h_{k,p}[\kappa,\ell] ||^2_2}.	
\end{align}	
The detailed system parameters for simulation are summarized in Table \ref{tab1}.  

\begin{table}[h]
	\centering
	\caption{Simulation Parameters}
	\label{tab1}
		\begin{tabular}{|c|c|c|}
			\hline
			\textbf{Contents}       & \textbf{Parameters}                           & \textbf{Values}                \\ \hline
			\multirow{5}{*}{System} & Carrier frequency (GHz)                       & 10                             \\ \cline{2-3} 
			& Subcarrier spacing (kHz)                      & 480                            \\ \cline{2-3} 
			& Bandwidth (MHz)                               & 122.88                         \\ \cline{2-3} 
			& Size of OTFS frame ($M$,$N$)                  & (256,8)                        \\ \cline{2-3} 
			& Number of BS antennas ($P_x$,$P_y$)           & (10,10)                        \\ \hline
			\multirow{4}{*}{TSL channel}    & LEO satellite velocity (km/s)                 & 7.58                           \\ \cline{2-3} 
			& IoT terminal velocity (m/s)                             & $0\sim100$                     \\ \cline{2-3} 
			& Propagation delay (ms)       & $0\sim0.067$                   \\ \cline{2-3} 
			& Doppler shift (kHz)                    & $0\sim178.2$                    \\ \hline 
%			& Range of elevation angle $\theta^{\rm ele}_k$ & $[-44.7^{\circ},44.7^{\circ}]$ \\ \cline{2-3} 
%			& Range of azimuth angle $\theta^{\rm azi}_k$   & $[0,360^{\circ})$              \\ \hline
		\end{tabular}%
\end{table}

For comparison, we take \cite{SWQ.OTFS} as the benchmark, which utilizes the embedded DD domain pilots and guard symbols to estimate the channel.
The sizes of pilots along the Doppler dimension and the delay dimension are denoted as $N_{\nu}$ and $M_{\tau}$ respectively, and we choose $N=N_{\nu}$ as \cite{SWQ.OTFS} did. Moreover, we consider the oracle-LS with the activity states of all the terminals and the support set of CIR vectors known as the lower bound of the proposed method.

\begin{figure}[!htp]
	\centering
	\subfigure[SNR versus AER]{	
		\includegraphics[width=\columnwidth, keepaspectratio]{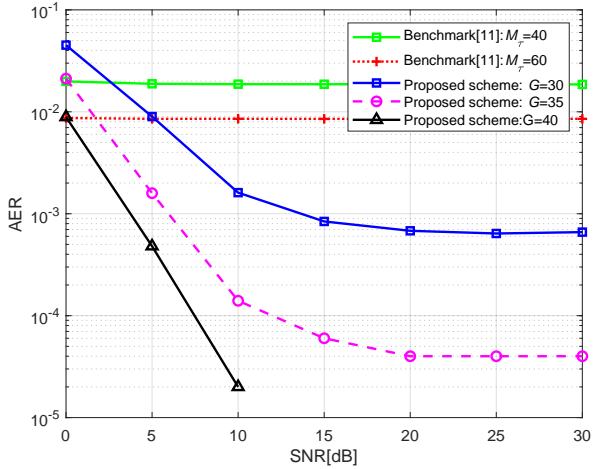}}
	
	\subfigure[SNR versus NMSE]{	
	\includegraphics[width=\columnwidth, keepaspectratio]{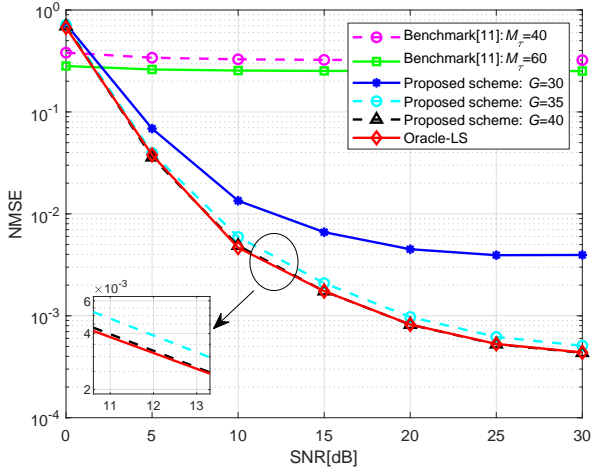}}
	
	\caption{Performances comparison against the SNR, where the number of active IoT terminals and potential terminals are fixed as $K_a = 10$ and $K=100$ respectively.}
	
	\label{fig3:comparision}
\end{figure}

As Fig. \ref{fig3:comparision} shows, with regard to simulation results of AUD and CE, on the one hand, the superiority of our method over the existing method is particularly noticeable. The scheme which performs AUD and CE in DD domain completely fails in satellite-enabled communication environment, which also confirms that it suffers from serious performance degradation under severe Doppler shift environment. 
This is because, the required pilots overhead are seriously increased to ensure a constant NMSE when Doppler shift is large and since small size of Doppler dimension resources units $N$ is usually adopted in this kind of fast time-varying channel, the resolution in Doppler domain is seriously limited and the support set is hardly right selected, which has great impact on the performance of both AUD and CE. 
On the other hand, the NMSE performance of our proposed method is very close to Oracle-LS when $G \ge 30$ in this simulation parameters configuration. 
The satisfactory CE accuracy and robustness of our proposed method under sever Doppler shift environment are owing to the accurate selection of support set with multiple measurements, super-resolution estimate of Doppler shift and the subsequent more accurate channel reconstruction.

\section{Conclusions}

In this paper, we introduce the NOMA-OTFS system to LEO-enabled constellation IoT
for facilitating reliable service with the massive connectivity and  investigate a new joint AUD and CE method for the first time. The designed TS-OTFS data frame enables us to fully leverage the sporadic traffic property and the structured sparsity of the TSL channel, as a result, we can perform AUD and coarse CE with the low cost of TS's. 
Furthermore, the super-resolution of Doppler shift in time domain and the subsequent accurate channel estimation with the parametric approach results in a significant performance improvement. 
As the simulation results shows, the superiority of our proposed method is noticeable in these high speed use cases including satellite-enabled commnuication scenario.

\section*{Acknowledgment}

The work was supported by NSFC under Grants 62071044 and 61827901, the BJNSF under Grant L182024.

\end{document}